\documentclass[%
 aps,
 prx,
 reprint,
 amsmath,
 amssymb,
 amsthm,
 showkeys,
 superscriptaddress,
 eprint,
 nofootinbib,
 bibnotes,
 twoside
]{revtex4-2}

\usepackage[utf8]{inputenc}
\usepackage[T1]{fontenc}
\usepackage{graphicx}
\usepackage{xcolor}
\usepackage{enumerate}
\usepackage{amsfonts,amsmath,amssymb,amsthm}
\usepackage{dsfont}
\usepackage{mathrsfs}
\usepackage{bm}

\usepackage{hyperref}
\hypersetup{
	colorlinks = true,
	linkcolor  = blue,
	citecolor  = blue,
	urlcolor   = blue,
	filecolor  = black,
	linktoc    = all,
}
\usepackage[numbered]{bookmark}

\newcommand{\one}{\mathds{1}}

\newcommand{\rr}{ {\hspace{1pt} \| \hspace{1pt}} }

\newcommand{\MM}{\mathcal{M}}

\newcommand{\HH}{\mathcal{H}}
\newcommand{\NN}{\mathcal{N}}

\newcommand{\EE}{\mathcal{E}}

\newcommand{\LL}{\mathcal{L}}
\newcommand{\KK}{\mathcal{K}}

\newcommand{\Hmin}{\widehat{H}}

\newcommand{\ie}{{\textit{i.e.}}}

\newcommand{\ket}[1]{|{#1}\rangle}
\newcommand{\bra}[1]{\langle{#1}|}
\newcommand{\ketbra}[1]{{\ket{#1} \! \bra{#1}}}

\newtheorem{thm}{Theorem}

\begin{document}

\title{Non-Multiplicativity of the Holevo Barycenter of Quantum Channels}

\author{Sayantan Chakraborty}
\affiliation{Department Mathematik/Informatik---Abteilung Informatik, Universit\"at zu K\"oln, Albertus-Magnus-Platz, D-50923 K\"oln, Germany}

\author{Stefano Mancini}
\affiliation{School of Science and Technology, University of Camerino, Via Madonna delle Carceri 9, I-62032 Camerino, Italy}
\affiliation{Istituto Nazionale di Fisica Nucleare, Sezione di Perugia, Via A. Pascoli, I-06123 Perugia, Italy}

\author{Leonardo Rossetti}
\affiliation{Department Mathematik/Informatik---Abteilung Informatik, Universit\"at zu K\"oln, Albertus-Magnus-Platz, D-50923 K\"oln, Germany}
\affiliation{School of Science and Technology, University of Camerino, Via Madonna delle Carceri 9, I-62032 Camerino, Italy}
\affiliation{Istituto Nazionale di Fisica Nucleare, Sezione di Perugia, Via A. Pascoli, I-06123 Perugia, Italy}

\author{Andreas Winter}
\affiliation{Department Mathematik/Informatik---Abteilung Informatik, Universit\"at zu K\"oln, Albertus-Magnus-Platz, D-50923 K\"oln, Germany}
\affiliation{ICREA \&{} F\'isica Te\`orica: Informaci\'o i Fen\`omens Qu\`antics, Departament de F\'isica, Universitat Aut\`onoma de Barcelona, ES-08193 Bellaterra, Spain}
%\affiliation{ICREA, Passeig Llu\'is Companys 23, ES-08010 Barcelona, Spain}
%\affiliation{Institute for Advanced Study, Technische Universit\"at M\"unchen, Lichtenbergstra{\ss}e 2a, D-85748 Garching, Germany}

\begin{abstract}
The Holevo barycenter of a quantum channel is the unique output state obtained as the average output of any ensemble achieving the Holevo capacity. Given two quantum channels, the multiplicativity problem asks whether this barycenter tensorizes under parallel composition, namely whether the barycenter of the product channel coincides with the tensor product of the individual barycenters. This question is closely related to the additivity problem for the Holevo capacity: additivity implies tensorization of the Holevo barycenter, while tensorization alone is not sufficient for additivity.
Although a construction is known that demonstrates the existence of channels with non-additive Holevo capacity, their corresponding Holevo barycenters still tensorize, leaving open whether multiplicativity might ultimately hold universally.
%Although the Shor--Hastings construction guarantees the existence of channels with non-additive Holevo capacity, the corresponding Holevo barycenters still tensorize, leaving open whether multiplicativity might ultimately hold universally.
Here, we answer this question in the negative by exhibiting channels for which the Holevo barycenter is not multiplicative under the tensor product of the channel with itself. Moreover, we show that the entropy of the Holevo barycenter is neither universally subadditive nor universally superadditive under tensor product.
\end{abstract}

\date{8 September 2026}

\maketitle

%\aw{Are we aiming for PRL? Then section headings should be smaller and starting a paragraph, rather than the space-wasting titles -- this would give a clearer picture of the space we have...}

\section{Introduction}
\label{sec:intro}
The classical communication capacity of a quantum channel quantifies the asymptotic rate at which classical information can be reliably transmitted. The evaluation of this quantity is, in general, a regularized problem, since one must in principle optimize over encodings involving arbitrarily many uses of the channel. The Holevo capacity is the corresponding single-letter quantity and, by the Holevo--Schumacher--Westmoreland theorem, gives the asymptotic communication rate when codewords are restricted to product-state encodings across channel uses~\cite{Schumacher_Westmoreland_1997,Holevo_1998}. If the Holevo capacity were additive under tensor products, then the regularization would no longer be needed, and the classical capacity would reduce to the Holevo capacity. The additivity of the Holevo capacity was a central open problem in quantum information theory for many years~\cite{Amosov_Holevo_Werner_2000}. It was ultimately settled in the negative by Hastings' counterexample to the additivity of the minimum output entropy~\cite{Hastings_2009}, after Shor had established the equivalence between the corresponding additivity conjectures for the Holevo capacity and the minimum output entropy~\cite{Shor_2004}.

Although the optimization defining the Holevo capacity may admit several capacity-achieving ensembles, their average output state is unique~\cite{schumacher1999optimalsignalensembles,Shirokov_2006_OptimalSets}. 
Throughout this work, we refer to this state as the \textit{Holevo barycenter}. In the literature, the same state is also referred to as the \textit{output optimal average state}~\cite{Shirokov_2006_HolevoCapacity}. The multiplicativity problem for the Holevo barycenter asks whether it tensorizes under parallel composition. Previous work by Shirokov establishes a direct connection with the additivity problem for the Holevo capacity: additivity forces the barycenter to tensorize, whereas multiplicativity alone is not sufficient for additivity~\cite{Shirokov_2006_OptimalSets}. Thus, non-additivity of the capacity is necessary, but may not be sufficient, for non-multiplicativity of the barycenter.

This logical asymmetry is already visible in the Shor–Hastings route~\cite{Shor_2004,Hastings_2009}. Shor’s reduction technique, applied to Hastings’ counterexamples, produces channels with non-additive Holevo capacity, but the twirling employed by the construction yields capacity-achieving ensembles with maximally mixed average outputs, both for the individual enlarged channels and for their product. The corresponding Holevo barycenters therefore still tensorize. At this stage, one might still suspect that the Holevo barycenter tensorizes universally.

In the present work, we rule out this possibility by exhibiting channels which, when tensored with itself violate tensor multiplicativity of the Holevo barycenter. Our construction relies on direct sums of quantum channels~\cite{Fukuda_2007}. For a direct-sum channel, the barycenter is block diagonal and its weights are determined by the capacities of the component channels; under tensor product, they are instead controlled by the capacities of the pairwise product channels. Starting from a pair with strictly superadditive Holevo capacity and adjoining an entanglement-breaking channel, whose Holevo capacity is additive with that of every channel~\cite{Shor_2002}, we obtain incompatible block weights and hence a non-multiplicative Holevo barycenter. Finally, by specializing the construction to two components and choosing a suitable depolarizing channel, we show that there is no universal entropic ordering: the entropy of the barycenter of the tensor square can be either smaller or larger than twice the entropy of the individual barycenter. Hence, the entropy of the Holevo barycenter is neither universally subadditive nor universally superadditive under tensorization.

\section{The Holevo Barycenter and the Multiplicativity Problem}
\label{Sec 2}
Throughout, all Hilbert spaces are assumed to be finite-dimensional, and
$\LL(\mathcal H)$ denotes the space of linear operators on a Hilbert space
$\mathcal H$. 

Let $\NN:\LL(\HH)\to\LL(\KK)$ be a quantum channel with input space
$\HH$ and output space $\KK$. For an input ensemble $\EE=\{p_i,\rho_i\}_i$, let $\Omega(\EE)\equiv\sum_i p_i\NN(\rho_i)$ denote its average output state. The Holevo capacity $\chi(\NN)$ admits the equivalent expressions~\cite{holevo1973,schumacher1999optimalsignalensembles}
\begin{equation}
\begin{aligned}
    \chi(\NN)
    &\equiv\max_{\EE}
    \Bigl[H(\Omega(\EE))-\sum_i p_iH(\NN(\rho_i))\Bigr]
    \\
    &=\max_{\EE}\sum_i p_i
    D\bigl(\NN(\rho_i)\rr\Omega(\EE)\bigr),
\end{aligned}
\label{eq: chi N}
\end{equation}
where $H$ denotes the
von Neumann entropy and $D$ the quantum relative entropy.

Let $\EE_1$ and $\EE_2$ be optimal ensembles for~\eqref{eq: chi N}, and let $\bar{\EE}$ be their equally weighted mixture. By concavity of the von Neumann entropy, $\bar{\EE}$ is also optimal. Applying Donald's identity~\cite{schumacher1999optimalsignalensembles} separately to the two components of $\bar{\EE}$ we obtain
\begin{equation*}
    \chi(\NN)=\chi(\NN)+\frac{1}{2}D(\Omega(\EE_1)\rr\Omega(\bar{\EE}))+\frac{1}{2}D(\Omega(\EE_2)\rr\Omega(\bar{\EE})).
\end{equation*}
Therefore, all optimal ensembles have the same average output state. We denote this unique state by $\Omega(\NN)$ and refer to it as the Holevo barycenter of $\NN$.

Let $\MM_1$ and $\MM_2$ be quantum channels for which the Holevo capacity is additive, \ie
\begin{equation}
\label{eq: add Hol cap}
    \chi(\MM_1\otimes\MM_2)=\chi(\MM_1)+\chi(\MM_2).
\end{equation}
Then the Holevo barycenter tensorizes:
\begin{equation*}
    \Omega(\MM_1\otimes\MM_2)
    =\Omega(\MM_1)\otimes\Omega(\MM_2).
\end{equation*}
Indeed, whenever Eq.~\eqref{eq: add Hol cap} holds, the tensor product of any two optimal ensembles for $\MM_1$ and $\MM_2$, respectively, is optimal for the product channel. Its average output state is $\Omega(\MM_1)\otimes\Omega(\MM_2)$, and the claim follows from the uniqueness of the Holevo barycenter. The converse implication, however, fails in general. Using a convex-analytic study of the associated optimal sets, Shirokov obtained a necessary-and-sufficient characterization of additivity: multiplicativity of the Holevo barycenter must be supplemented by an additional geometric condition on the optimal set of the product channel~\cite{Shirokov_2006_OptimalSets}. The Shor-Hastings construction exhibits this asymmetry explicitly \cite{Shor_2004,Hastings_2009}.

The minimum output entropy of a channel $\NN$ is defined as 
\begin{equation*}
\label{eq: min out ent}
  \Hmin(\NN) \equiv \min_{\ket{\psi}} H\bigl(\NN(\ketbra{\psi})\bigr).
\end{equation*}
If the output space $\KK$ has dimension $d$, then
\begin{equation}
\label{eq: bound for chi}
    \chi(\NN)\leq\log d-\Hmin(\NN).
\end{equation}
Let $\NN_1:\LL(\HH_1)\to\LL(\KK_1)$ and $\NN_2:\LL(\HH_2)\to\LL(\KK_2)$ be quantum channels whose output spaces have the same dimension, $\dim\KK_1=\dim\KK_2=d$, and for which the minimum output entropy is strictly subadditive, \ie
\begin{equation*}
\Hmin(\NN_1\otimes\NN_2)
<
\Hmin(\NN_1)+\Hmin(\NN_2).
\end{equation*}
The existence of such pairs follows from Hastings' result~\cite{Hastings_2009}. For $\ell=1,2$, let $\mathcal X_\ell$ be a $d^2$-dimensional Hilbert space with an orthonormal basis
$\{\ket{x_\ell}\}_{x_\ell=0}^{d^2-1}$, and define the enlarged channel
$\NN_\ell':\LL(\HH_\ell\otimes\mathcal X_\ell)\to\LL(\KK_\ell)$ by
\begin{equation*}
    \NN_\ell'(\rho)
    =
    \sum_{x_\ell=0}^{d^2-1}
    U_{x_\ell}^{(\ell)}
    \NN_\ell\bigl(
        \bra{x_\ell}\rho\ket{x_\ell}
    \bigr)
    U_{x_\ell}^{(\ell)\dagger},
\end{equation*}
where, for each $\ell=1,2$,
$\{U_x^{(\ell)}\}_{x=0}^{d^2-1}$ is a perfectly randomizing set of
unitaries on $\KK_\ell$.
Let $\ketbra{v_\ell}$ attain $\Hmin(\NN_\ell)$, for $\ell=1,2$, and let $\ketbra{w}$ attain $\Hmin(\NN_1\otimes\NN_2)$. By concavity and unitary invariance of the von Neumann entropy,
$\Hmin(\NN_\ell')=\Hmin(\NN_\ell)$ and
$\Hmin(\NN_1'\otimes\NN_2')=\Hmin(\NN_1\otimes\NN_2)$.
Consider the ensembles
$\{\ketbra{v_\ell}\otimes\ketbra{x_\ell}\}_{x_\ell}$, for $\ell=1,2$, and
$\{\ketbra{w}\otimes\ketbra{x_1}\otimes\ketbra{x_2}\}_{x_1,x_2}$,
all taken with uniform probabilities. Their average outputs are maximally mixed, while every signal state produces an output with the corresponding minimum entropy. They therefore saturate the bound in~\eqref{eq: bound for chi} and are optimal for $\NN_\ell'$ and $\NN_1'\otimes\NN_2'$, respectively. Consequently,
\begin{equation*}
    \chi(\NN_1'\otimes\NN_2')
    >\chi(\NN_1')+\chi(\NN_2').
\end{equation*}
This is precisely Shor's reduction from non-additivity of the minimum output entropy to non-additivity of the Holevo capacity~\cite{Shor_2004}. Since the optimal ensembles above have maximally mixed average outputs, uniqueness of the Holevo barycenter also gives
\begin{equation*}
    \Omega(\NN_1'\otimes\NN_2')
    =\Omega(\NN_1')\otimes\Omega(\NN_2').
\end{equation*}
Thus, the Shor--Hastings construction yields non-additive Holevo capacity while the corresponding Holevo barycenters still tensorize, showing that non-additivity of the capacity alone does not force non-multiplicativity of the barycenter.

\section{Direct Sums of Quantum Channels}
For $i=1,\ldots,n$, let $\NN_i:\LL(\HH_i)\to\LL(\KK_i)$ be quantum channels. Set $\HH\equiv\bigoplus_i\HH_i$ and $\KK\equiv\bigoplus_i\KK_i$, and let $P_i$ denote the projection operator in $\HH$ onto $\HH_i$. The direct-sum channel $\NN\equiv\bigoplus_i\NN_i:\LL(\HH)\to\LL(\KK)$ is defined by
\begin{equation*}
    \NN(\rho)\equiv\bigoplus_i\NN_i(P_i\rho P_i).
\end{equation*}
Fukuda and Wolf showed~\cite{Fukuda_2007} that
\begin{equation*}
    \chi(\NN)=\log\sum_i2^{\chi(\NN_i)},
\end{equation*}
and that an optimal ensemble for $\NN$ may be formed as a block-diagonal
convex combination of optimal ensembles for the individual channels, where the ensemble associated with the $i$-th block is assigned the weight
$\lambda_i=2^{\chi(\NN_i)}/\sum_k2^{\chi(\NN_k)}$. Consequently,
\begin{equation}
    \Omega(\NN)
    =\bigoplus_i
    \frac{2^{\chi(\NN_i)}}{\sum_k2^{\chi(\NN_k)}}
    \Omega(\NN_i).
\label{eq: barycenter of dir sum}
\end{equation}
Since $\NN\otimes\NN=\bigoplus_{i,j}\NN_i\otimes\NN_j$, which is again a direct-sum channel, we can apply Eq.~\eqref{eq: barycenter of dir sum} to the product channel $\NN\otimes\NN$, and expanding $\Omega(\NN)\otimes\Omega(\NN)$ yields
\begin{align}
    \Omega(\NN\otimes\NN)
    &=
    \bigoplus_{i,j}
    \frac{2^{\chi(\NN_i\otimes\NN_j)}}
    {\sum_{k,l}2^{\chi(\NN_k\otimes\NN_l)}}
    \Omega(\NN_i\otimes\NN_j),
    \label{eq: bary of tens square}\\
    \Omega(\NN)\otimes\Omega(\NN)
    &=
    \bigoplus_{i,j}
    \frac{2^{\chi(\NN_i)+\chi(\NN_j)}}
    {\sum_{k,l}2^{\chi(\NN_k)+\chi(\NN_l)}}
    \Omega(\NN_i)\otimes\Omega(\NN_j).
    \label{eq: tens square of bary}
\end{align}
In the following two sections, we are going to exploit these formulas by suitable choices of the component channels, to demonstrate our main results.

\section{Non-Multiplicativity of the Holevo Barycenter}
The two states in Eqs.~\eqref{eq: bary of tens square} and~\eqref{eq: tens square of bary} are block diagonal with respect to the same family of mutually orthogonal output subspaces and can therefore coincide only if their corresponding block weights agree. We now construct direct-sum channels for which this necessary condition fails.

\begin{thm}
\label{thm: non-multiplicativity}
There exists a quantum channel $\NN$ such that
\begin{equation*}
    \Omega(\NN\otimes\NN)
    \neq
    \Omega(\NN)\otimes\Omega(\NN).
\end{equation*}
\end{thm}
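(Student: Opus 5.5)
We take $\NN$ to be a direct sum of three channels, $\NN=\NN_1\oplus\NN_2\oplus\NN_3$. The pair $\NN_1,\NN_2$ has strictly superadditive Holevo capacity, namely $\chi(\NN_1\otimes\NN_2)>\chi(\NN_1)+\chi(\NN_2)$; such a pair exists by Shor's reduction applied to Hastings' counterexample, as recalled in Section~\ref{Sec 2}. The third channel $\NN_3$ is entanglement-breaking, for instance a replacement channel or a fully dephasing qubit channel with $\chi(\NN_3)=1$. By Shor's result, $\chi(\NN_3\otimes\MM)=\chi(\NN_3)+\chi(\MM)$ for every channel $\MM$.

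We then compare block weights in Eqs.~\eqref{eq: bary of tens square} and~\eqref{eq: tens square of bary}. Both states are block diagonal with respect to the orthogonal decomposition $\bigoplus_{i,j}\KK_i\otimes\KK_j$. If they were equal, their traces on each block would coincide, so
\[
\frac{2^{\chi(\NN_i\otimes\NN_j)}}{Z}=\frac{2^{\chi(\NN_i)+\chi(\NN_j)}}{Z'}\quad\text{for all } i,j,
\]
where $Z=\sum_{k,l}2^{\chi(\NN_k\otimes\NN_l)}$ and $Z'=\sum_{k,l}2^{\chi(\NN_k)+\chi(\NN_l)}$. This means the ratio $2^{\chi(\NN_i\otimes\NN_j)-\chi(\NN_i)-\chi(\NN_j)}$ would be independent of $(i,j)$.

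This is where the entanglement-breaking component does its work. For $(i,j)=(3,3)$ the ratio is $1$, by additivity of $\NN_3$ with any channel. For $(i,j)=(1,2)$ the ratio is strictly greater than $1$, by the choice of the pair. This contradiction shows that $\Omega(\NN\otimes\NN)\neq\Omega(\NN)\otimes\Omega(\NN)$.

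The only step needing care is the justification for applying Eq.~\eqref{eq: barycenter of dir sum} to $\NN\otimes\NN$ viewed as the direct sum over pairs $(i,j)$. One must check that $\NN\otimes\NN$ acts on block-diagonal inputs exactly as $\bigoplus_{i,j}\NN_i\otimes\NN_j$: the off-diagonal input blocks $P_i\rho P_{i'}$ are annihilated on each tensor factor, so the identification holds. The Fukuda–Wolf formula then applies directly, so I expect no real obstacle beyond this bookkeeping.

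It is also worth noting why $\NN_3$ is needed. With only $\NN_1\oplus\NN_2$, the diagonal pairs $(1,1)$ and $(2,2)$ might themselves be superadditive by the same amount, and the ratio argument would not immediately conclude. Adjoining $\NN_3$ supplies a block whose ratio is guaranteed to equal $1$, which removes this issue.
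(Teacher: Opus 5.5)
Your proposal is correct and follows essentially the same route as the paper: adjoin an entanglement-breaking channel to a strictly superadditive Shor--Hastings pair, then use the Fukuda--Wolf direct-sum formula to show the block weights of $\Omega(\NN\otimes\NN)$ and $\Omega(\NN)\otimes\Omega(\NN)$ cannot all agree. The only difference is immaterial: the paper compares the $(1,2)$ block against the mixed block $(0,1)$, using additivity of the entanglement-breaking channel with $\NN_1$, whereas you compare the $(1,2)$ block against the entanglement-breaking channel's self-pair block, using its additivity with itself.
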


\begin{proof}
By the Shor--Hastings construction recalled in Sec.~\ref{Sec 2}, we can find channels $\NN_1$ and $\NN_2$ for which the Holevo capacity is strictly superadditive, \ie
\begin{equation}
\label{eq: non-additive chi}
    \chi(\NN_1\otimes\NN_2)=\chi(\NN_1)+\chi(\NN_2)+\delta,
\end{equation}
with $\delta>0$. Let $\NN_0$ be an arbitrary entanglement-breaking channel. Shor proved that~\cite{Shor_2002}
\begin{equation}
\label{eq: add of ent break}
    \chi(\NN_0\otimes\Phi)=\chi(\NN_0)+\chi(\Phi),
\end{equation}
for every quantum channel $\Phi$. Consider the direct-sum channel
\begin{equation}
\label{eq: counter channel}
    \NN=\NN_0\oplus\NN_1\oplus\NN_2.
\end{equation}
For $i,j\in\{0,1,2\}$, let $a_{ij}$ and $b_{ij}$ denote, respectively, the weights of $\Omega(\NN\otimes\NN)$ and $\Omega(\NN)\otimes\Omega(\NN)$ on the $(i,j)$-th block, as given in Eqs.~\eqref{eq: bary of tens square} and~\eqref{eq: tens square of bary}:
\begin{align*}
    a_{ij}
    &=\frac{2^{\chi(\NN_i\otimes\NN_j)}}
    {\sum_{k,l=0}^{2}2^{\chi(\NN_k\otimes\NN_l)}},\\
    b_{ij}
    &=\frac{2^{\chi(\NN_i)+\chi(\NN_j)}}
    {\sum_{k,l=0}^{2}2^{\chi(\NN_k)+\chi(\NN_l)}}.
\end{align*}

We compare $a_{12}/a_{01}$ and $b_{12}/b_{01}$. The normalization factors cancel when taking ratios. Using Eqs.~\eqref{eq: non-additive chi} and~\eqref{eq: add of ent break}, we obtain
\begin{equation*}
    \frac{a_{12}}{a_{01}}
    =2^\delta\frac{b_{12}}{b_{01}}
    >\frac{b_{12}}{b_{01}},
\end{equation*}
where the inequality follows from $\delta>0$. Thus, the two families of block weights cannot coincide, which implies that $\Omega(\NN\otimes\NN)\neq\Omega(\NN)\otimes\Omega(\NN)$
for the channel $\NN$ defined in Eq.~\eqref{eq: counter channel}.
\end{proof}

\section{No Universal Entropic Ordering}
Beyond the non-multiplicativity of the Holevo barycenter, we also show that no universal ordering exists between
$H(\Omega(\NN\otimes\NN))$ and
$H(\Omega(\NN)\otimes\Omega(\NN))$.

\begin{thm}
\label{thm:no-universal-entropy-ordering}
There exist quantum channels $\NN_{-}$ and $\NN_{+}$ such that
\begin{align*}
    H\bigl(\Omega(\NN_{-}\otimes\NN_{-})\bigr)
    &<2H\bigl(\Omega(\NN_{-})\bigr),\\
    H\bigl(\Omega(\NN_{+}\otimes\NN_{+})\bigr)
    &>2H\bigl(\Omega(\NN_{+})\bigr).
\end{align*}
\end{thm}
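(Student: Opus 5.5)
The construction specializes Eq.~\eqref{eq: barycenter of dir sum} to a two-component direct sum $\NN=\NN_a\oplus\NN_b$. The entropy of the barycenter then splits into a binary ``which-block'' entropy plus weighted block entropies, and the sign of the defect can be steered through these terms.

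\textbf{Step 1: entropy formula.} Write $\lambda=2^{\chi(\NN_a)}/(2^{\chi(\NN_a)}+2^{\chi(\NN_b)})$. Eq.~\eqref{eq: barycenter of dir sum} gives
\begin{equation*}
H(\Omega(\NN))=h(\lambda)+\lambda H(\Omega(\NN_a))+(1-\lambda)H(\Omega(\NN_b)),
\end{equation*}
where $h$ is the binary entropy. Apply the same decomposition to Eq.~\eqref{eq: bary of tens square} for $\NN\otimes\NN$.

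\textbf{Step 2: the ``$-$'' case.} Take both components to be channels for which every pairwise product is additive, for instance $\NN_a$ and $\NN_b$ entanglement breaking, or one of them entanglement breaking together with Eq.~\eqref{eq: add of ent break}. Then, by the argument of Sec.~\ref{Sec 2}, every block barycenter tensorizes and the block weights agree, so equality holds. Strict inequality therefore needs non-additivity.

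For that, use $\NN_a=\NN_1'\otimes\NN_2'$ from the Shor--Hastings construction, which has a maximally mixed barycenter, or simply the direct sum $\NN_1'\oplus\NN_2'$. The cleaner option is $\NN=\NN_0\oplus\NN_1\oplus\NN_2$ as in the proof of Thm.~\ref{thm: non-multiplicativity}. All its block barycenters are maximally mixed: the $\NN_\ell'$ blocks are, as are their products, since Sec.~\ref{Sec 2} showed this. Choose $\NN_0$ to be an entanglement-breaking channel with maximally mixed barycenter, such as a completely depolarizing or classical identity channel.

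With all blocks maximally mixed of dimension $d_{ij}=d_id_j$, we get
\begin{equation*}
H(\Omega(\NN\otimes\NN))=H(\{a_{ij}\})+\sum_{ij}a_{ij}\log d_id_j,
\end{equation*}
and similarly with $b_{ij}$ for $2H(\Omega(\NN))$. The difference is
\begin{equation*}
H(a)-H(b)+\sum_{ij}(a_{ij}-b_{ij})\log d_id_j .
\end{equation*}
Since $a$ is obtained from the product distribution $b$ by multiplying the $(1,2)$ and $(2,1)$ weights by $2^\delta$ and renormalizing, this quantity is an explicit function of $\delta$, the capacities, and the dimensions.

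\textbf{Step 3: tuning the sign.} The lever is the choice of $\NN_0$: a classical identity or depolarizing channel on dimension $d_0$ with $\chi(\NN_0)$ freely tunable. Choosing $d_0$ huge but $\chi(\NN_0)$ small (strong depolarizing), or $d_0$ tiny, shifts weight toward or away from the high-dimensional blocks. Boosting the $(1,2)$ blocks moves weight onto blocks of output dimension $d^2$; relative to a large-$d_0$ block this lowers the dimensional term, giving ``$<$'', while relative to a trivial $d_0=1$ block it raises it, giving ``$>$''.

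To keep the algebra simple, I would rather follow the paper's hint and use just two components: $\NN_{\pm}=\NN'\oplus\mathcal D_p$, where $\NN'$ is a channel with $\chi(\NN'\otimes\NN')>2\chi(\NN')$ and maximally mixed barycenters, and $\mathcal D_p$ is a depolarizing channel, which is additive with everything (King), with barycenter $\one/d_0$. Varying $d_0$ and $p$ moves the comparison across zero.

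\textbf{Main obstacle.} The hard part is guaranteeing strict inequalities in both directions despite $\delta$ being unknown and possibly tiny. The difference is smooth in $\delta$ and vanishes at $\delta=0$, so I would compute its first-order coefficient in $\delta$. This coefficient should equal, up to positive factors, the difference between the total log-dimension-plus-surprisal of the boosted blocks and its average under $b$. Its sign flips according to whether $\log d_0-\chi(\mathcal D_p)$ is larger or smaller than the corresponding quantity for $\NN'$. I would then pick the two depolarizing parameters so that this first-order coefficient is large with each sign, making higher-order terms in $\delta$ irrelevant.
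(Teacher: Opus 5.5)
Your construction is the paper's: a two-component direct sum of a Shor-type channel $\NN'$, with $\chi(\NN'\otimes\NN')=2\chi(\NN')+\delta$, and a depolarizing channel. King's additivity makes all four block barycenters maximally mixed, so the entropies reduce to block weights plus dimension terms. Your sign criterion, comparing $m_0=\log d_0-\chi(\mathcal D_p)$ with $m_1=\log d-\chi(\NN')$, is also the right one.

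The gap is that the strict inequalities are never proved, and once the construction is set up they are the whole content of the theorem. A first-order expansion in $\delta$ does not settle them. $\delta$ is a fixed, possibly tiny, property of $\NN'$, not a parameter you can send to zero. In the ``$>$'' direction the first-order coefficient also cannot be made large, because $m_1-m_0\le m_1=\log d-\chi(\NN')$, which $\NN'$ likewise fixes. The negative contribution is of order $\beta\delta^2$, with $\beta=\lambda_1^2$ the product weight of the $\NN'\otimes\NN'$ block. The positive contribution is of order $\beta\lambda_0\delta(m_1-m_0)$. So you need a quantitative lower bound on $\log d-\chi(\NN')$ in terms of $\delta$, and the proposal never supplies one. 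Separately, your three-component detour silently assumed $\chi(\NN_\ell'\otimes\NN_\ell')=2\chi(\NN_\ell')$, which is not known, so dropping it was right.

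The missing ingredient is the one the paper uses: $\chi(\NN')+\delta\le\log d$, which follows from $\chi(\NN'\otimes\NN')\le\chi(\NN'\otimes\mathrm{id}_d)=\chi(\NN')+\log d$. The paper then avoids any expansion. It takes $d_0=d$, so the dimension terms cancel, and chooses either $\chi(\NN_0)=\chi(\NN_1)$ or $\chi(\NN_0)=\chi(\NN_1)+\delta$; the second choice is attainable by the inequality above. The first choice compares a uniform $\lambda$ with a non-uniform $\mu$, and the second gives an explicit strict majorization $\lambda\succ\mu$, so both cases reduce to four-point distributions.

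Your more general route can also be closed exactly. Let $\lambda_0,\lambda_1$ be the block weights of $\Omega(\NN)$, with $\beta=\lambda_1^2$ and $x=2^\delta$. One finds
\[
H\bigl(\Omega(\NN\otimes\NN)\bigr)-2H\bigl(\Omega(\NN)\bigr)
=\log\bigl(1+(x-1)\beta\bigr)-\frac{x\beta\delta}{1+(x-1)\beta}
+\frac{2(x-1)\beta\lambda_0(m_1-m_0)}{1+(x-1)\beta}.
\]
The first two terms together are strictly negative, by strict convexity of $-\log$. They are bounded below by $-(x-1)\beta\delta/(1+(x-1)\beta)$. Hence ``$<$'' holds whenever $m_0\ge m_1$. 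Likewise ``$>$'' holds whenever $2\lambda_0(m_1-m_0)>\delta$. An example is $\mathcal D_p=\mathrm{id}_d$, where $m_0=0$, $\lambda_0>1/2$ because $\chi(\NN')<\log d$, and $m_1\ge\delta$.
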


\begin{proof}
For this purpose, we specialize the preceding direct-sum construction to two component channels, a setting that already suffices to obtain non-multiplicativity. Let $\NN_1$ be a channel obtained through Shor's reduction such that
\begin{equation}
\label{eq: str superadd hol for single N}
    \chi(\NN_1\otimes\NN_1)
    =
    2\chi(\NN_1)+\delta,
\end{equation}
with $\delta>0$. Such a channel $\NN_1$ can be obtained by applying the Shor reduction recalled in Sec.~\ref{Sec 2} to a channel for which the minimum output entropy is strictly subadditive under the tensor product with itself.
The existence of such a channel is guaranteed by Hastings'
result~\cite{Hastings_2009}. Let $d$ denote the output dimension of $\NN_1$. Let $\NN_0$ be a depolarizing channel with the same output dimension, and consider the direct-sum channel
$\NN=\NN_0\oplus\NN_1$.

By the Shor–Hastings construction recalled in Sec.~\ref{Sec 2}, both
$\Omega(\NN_1)$ and $\Omega(\NN_1\otimes\NN_1)$ are maximally mixed on their respective output spaces. Similarly, the symmetry of the depolarizing channel implies that $\Omega(\NN_0)$ is maximally mixed. Moreover, King's additivity result for the depolarizing channel gives~\cite{King:depolarising}
\begin{equation*}
    \chi(\NN_0\otimes\Phi)
    =
    \chi(\NN_0)+\chi(\Phi),
\end{equation*}
for every quantum channel $\Phi$. Hence, by the implication from additivity to multiplicativity established in Sec.~\ref{Sec 2},
$\Omega(\NN_i\otimes\NN_j)=\one/d^2$ for all
$i,j\in\{0,1\}$.

For $i,j\in\{0,1\}$, let $\lambda_i$ denote the weight of $\Omega(\NN)$ on the $i$-th block, and let $\mu_{ij}$ denote the weight of $\Omega(\NN\otimes\NN)$ on the $(i,j)$-th block, as given in Eqs.~\eqref{eq: barycenter of dir sum} and~\eqref{eq: bary of tens square}. Define the probability distributions
\begin{equation*}
    \lambda=(\lambda_i\lambda_j)_{i,j\in\{0,1\}},
    \qquad
    \mu=(\mu_{ij})_{i,j\in\{0,1\}}.
\end{equation*}
Since all four block states in both decompositions are maximally mixed on a $d^2$-dimensional output space, it follows that
\begin{align}
    H\bigl(\Omega(\NN)\otimes\Omega(\NN)\bigr)
    &=H(\lambda)+2\log d,
    \label{eq: entropy of tens of bary}\\
    H\bigl(\Omega(\NN\otimes\NN)\bigr)
    &=H(\mu)+2\log d,
    \label{eq: entropy of bary of tens}
\end{align}
where $H(\lambda)$ and $H(\mu)$ denote the Shannon entropies of the corresponding distributions. Therefore, the comparison between the two von Neumann entropies reduces to that between $H(\lambda)$ and $H(\mu)$. By varying the depolarizing parameter, the Holevo capacity of $\NN_0$ can be chosen arbitrarily between $0$ and $\log d$~\cite{King:depolarising}. We now show that either strict ordering can be obtained by appropriately choosing the Holevo capacity of $\NN_0$.

First, choose $\chi(\NN_0)=\chi(\NN_1)$, and denote the resulting direct-sum channel by $\NN_{-}$. Then $\lambda$ is uniform on four outcomes, whereas, up to a permutation of its entries,
$\mu=(1,1,1,2^\delta)/(3+2^\delta)$. Since $\delta>0$, the distribution $\mu$ is nonuniform, whereas
$\lambda$ is uniform. Hence, $\mu\succ\lambda$, strictly, and therefore $H(\mu)<H(\lambda)$. By Eqs.~\eqref{eq: entropy of tens of bary} and~\eqref{eq: entropy of bary of tens}, this establishes the first inequality in the proposition.

For the opposite ordering, choose instead
$\chi(\NN_0)=\chi(\NN_1)+\delta$, which is always possible. Indeed,
%$\NN_1\otimes\NN_1=(\NN_1\otimes\mathrm{id}_d)\circ(\mathrm{id}\otimes\NN_1)$, so
the data processing inequality for the Holevo capacity, together with King's additivity result~\cite{King:depolarising}, gives
$\chi(\NN_1\otimes\NN_1)\leq\chi(\NN_1\otimes\mathrm{id}_d)
=\chi(\NN_1)+\log d$, where $\mathrm{id}_d$ denotes the identity channel on a $d$-dimensional system. Combining this inequality with Eq.~\eqref{eq: str superadd hol for single N}, we obtain
$\chi(\NN_1)+\delta\leq\log d$. Hence, the desired value of $\chi(\NN_0)$ can be attained by a suitable choice of the depolarizing parameter. In non-increasing order, the two distributions $\lambda$ and $\mu$ are given by
\begin{align*}
    \lambda &=\frac{1}{(2^\delta+1)^2}\bigl(2^{2\delta},2^{\delta},2^{\delta},1\bigr),\\
    \mu &=\frac{1}{2^{\delta}+3}\bigl(2^{\delta},1,1,1\bigr).
\end{align*}
A direct comparison shows that
$\lambda\succ\mu$, strictly. This implies that
$H(\lambda)<H(\mu)$, which corresponds to the second inequality in the proposition.
\end{proof}

Actually, looking at the proof, either strict ordering between $\lambda$ and $\mu$ in the sense of majorization can be enforced. Since all the block states are maximally mixed on spaces of the same dimension, these majorization relations lift to the corresponding quantum states. Consequently, neither majorization direction between $\Omega(\MM\otimes\NN)$ and $\Omega(\MM)\otimes\Omega(\NN)$ holds universally for quantum channels $\MM$ and $\NN$.

\section{Discussion}
\label{sec:discussion}
We have shown that the Holevo barycenter of a channel is in general not tensor-multiplicative, which represents a strengthening of celebrated non-additivity of the Holevo capacity. Indeed, our construction, using a direct sum of additive and non-additive channels, shows that there cannot be any simple relation between the tensor product of two Holevo barycenters and the barycenter of the tensor product channel.

This marks a genuine structural departure from previously held intuitions about the geometric stability of output ensembles. Tensoring channels can generate new extremal points in the output ensemble geometry -- points that do not arise from either channel individually --, demonstrating that barycentric structure is sensitive to correlations created by tensor products. This geometric emergence suggests that barycentric behavior cannot be understood solely in single-shot terms: tensor instability must be treated as an intrinsic property in any meaningful classification of quantum channels. In particular, families of channels may be constructed whose tensor products separate entropic or geometric quantities that coincide in isolation, providing new tools for probing the fine structure of quantum information measures.

Moreover, the absence of a universal entropy ordering for barycenters reinforces the idea that barycenter based resource measures are inherently contextual. Their comparison depends on the tensor environment, indicating that barycentric quantities behave more like relational resources than absolute ones. The possibility that barycenters may activate under tensoring—mirroring activation phenomena known for capacities \cite{SY2008,CCH2011}—opens a further line of inquiry: whether barycentric activation can be harnessed to amplify or suppress specific quantum informational properties.

These observations sharpen a longstanding challenge in quantum information theory: the search for a manageable counterexample to the additivity of the Holevo capacity. The geometric mechanisms uncovered here may offer new avenues towards such a construction. Finally, our analysis suggests that the relation between $\chi$-additivity and barycenter multiplicativity may extend beyond finite dimension, raising the prospect of exploring these phenomena in continuous variable channels, where geometric and entropic effects often manifest in amplified form.

\section*{Acknowledgments}
AW thanks Nilanjana Datta for her interest in this question and for initial discussions. 
No LLMs were involved in finding and proving the present results; furthermore, if the writing is atrocious it is simply because none of us is a native speaker of English.
SC and AW are supported by the Alexander von Humboldt Foundation. 
AW acknowledges furthermore support by the German Research Foundation (DFG) under Germany’s Excellence Strategy: Cluster of Excellence ``Matter and Light for Quantum Computing`` (ML4Q) EXC 2004/2–390534769; and by the Spanish MICIN (project PID2022-141283NB-I00) with the support of FEDER funds.

%\newpage

\bibliography{biblio}

\end{document}